\documentclass{article}

\usepackage[compact]{titlesec}

\usepackage{fancyhdr}
\setlength{\headsep}{14pt}
\setlength{\footskip}{19pt}

\usepackage{geometry}
\geometry{letterpaper,includeheadfoot,left=1in,right=1in,top=0.7in,bottom=0.7in}

\usepackage{epsfig,endnotes,url,color,subfigure}
\usepackage{array}
\usepackage{multirow}
\usepackage{xspace}
\usepackage{verbatim}

\usepackage[english]{babel}
\usepackage{blindtext}

\usepackage{amsthm}
\usepackage{mathpazo} 
\linespread{1.03}        
\usepackage{amsmath}

\newcommand{\projecttitle}{Slider\xspace}

\newtheorem{theorem}{Theorem}





\newcommand{\ninput}{\ensuremath{n_i}\xspace}
\newcommand{\noutput}{\ensuremath{n_O}\xspace}
\newcommand{\nmap}{\ensuremath{N_M}\xspace}
\newcommand{\nred}{\ensuremath{N_R}\xspace}
\newcommand{\ncon}{\ensuremath{N_C}\xspace}
\newcommand{\nconAppend}{\ensuremath{N_{CA}}\xspace}
\newcommand{\nconFix}{\ensuremath{N_{CF}}\xspace}
\newcommand{\nconVariable}{\ensuremath{N_{CV}}\xspace}

\newcommand{\nkeysetmap}{\ensuremath{n_{mk}}}
\newcommand{\nkeyvaluemap}{\ensuremath{n_{m}}}




\pagenumbering{arabic}

\begin{document}

\date{}
\title{Asymptotic Analysis of Self-Adjusting Contraction Trees} 
\author{Pramod Bhatotia\\bhatotia@mpi-sws.org}

\maketitle
\thispagestyle{empty}
\pagestyle{empty}

In this report, we analyze the asymptotic efficiency of self-adjusting contraction trees proposed as part of the \projecttitle project~\cite{slider, contraction-tree}.  Self-adjusting contraction trees are used for incremental computation~\cite{Bhatotia15,  ithreads, shredder, incApprox}. Our analysis extends the asymptotic efficiency analysis of Incoop~\cite{incoop-hotcloud, incoop}. We consider two
different runs: the {\em initial run} of an \projecttitle computation,
where we perform a computation with some input $I$, and a second
run for {\em dynamic update} where we change the input from $I$ to
$I'$ and perform the same computation with the new input.  In the
common case, we perform a single initial run followed by many dynamic
updates.

For the initial run, we define the {\em overhead} as the slowdown of
\projecttitle compared to a conventional implementation of MapReduce
such as with Hadoop.  We show that the overhead depends on
communication costs and, if these are independent of
the input size, which they often are, then it is also constant.  Our
experiment evaluation confirms that the overhead is relatively
small.  We show that dynamic updates are dominated by the time it
takes to execute fresh tasks that are affected by the changes to the
input data, which, for a certain class of computations and small
changes, is logarithmic in the size of the input.

In the analysis, we use the following terminology to refer to the
three different types of computational tasks that form an
\projecttitle~computation: Map tasks, Self-adjusting balanced tree (applications
of the Combiner function for three different modes of operation for sliding-window computations), and Reduce tasks.

Our bounds depend on the total number of map tasks, written
\nmap, and the total number of reduce tasks written \nred. In addition, we also
take in account the total number of stages in self-adjusting balanced tree,
denoted as \ncon. We write
\ninput and \noutput to denote the total size of the input and output
respectively, \nkeyvaluemap~to denote the total number of key-value
pairs output by the Map phase, and $\nkeysetmap$ to denote the set of
distinct keys emitted by the Map phase. The number of stages in self-adjusting
balanced tree is a property of sliding-window computation mode: append-only
$(\nconAppend  = O(\nkeysetmap))$, fixed-width window slides
$(\nconFix = O(\nkeysetmap \cdot \lceil log_2(buckets) \rceil)) $, and
variable-width window slides$((\nconVariable) = \lceil  O(\nkeysetmap \cdot
 log_2(\nmap) \rceil))$.

For our time bounds, we will additionally assume that each Map,
Combine, and Reduce function performs work that is asymptotically
linear in the size of their inputs.  Furthermore, we will assume that
the Combine function is {\em monotonic}, i.e., it produces an output
that is no larger than its input. This assumption is satisfied in
most applications, because Combiners often reduce the size of
the data (e.g., a Combine function to compute the sum of values takes
multiple values and outputs a single value).

\begin{theorem}[Initial Run:Time and Overhead]
\label{thm:time-initial}
\label{thm:time-overhead}
~\\
Assuming that Map, Combine, and Reduce functions take time
asymptotically linear in their input size and that Combine functions
are monotonic, total time for performing an incremental MapReduce
computation in \projecttitle with an input of size \ninput, where
\nkeyvaluemap key-value pairs are emitted by the Map phase is
$O(\nmap + (\nred + \ncon)) = O(\ninput + \nkeyvaluemap)$. This
results in an overhead of $O(\ncon) = O(\nconAppend || \nconFix ||
\nconVariable$) over conventional MapReduce.
\end{theorem}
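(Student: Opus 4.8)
The plan is to split the total running time into the three families of tasks that constitute a \projecttitle computation --- the Map tasks, the stages of the self-adjusting balanced tree (Combiner applications, in whichever of the three sliding-window modes is in use), and the Reduce tasks --- to bound each family in isolation, and then to add the three bounds. Writing $T = T_{M} + T_{C} + T_{R}$ for these contributions, each taken to include the $O(1)$ per-task bookkeeping for hashing (\timeHash), messaging (\timeMessage), and memoization (\timeMemo), it will suffice to establish $T_{M} = O(\nmap) = O(\ninput)$, $T_{R} = O(\nred) = O(\nkeyvaluemap)$, and $T_{C} = O(\ncon)$, since adding these yields $O(\nmap + (\nred + \ncon))$, and using $\ncon = O(\nkeyvaluemap)$ in the append-only mode this collapses to $O(\ninput + \nkeyvaluemap)$.

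First I would dispose of the two straightforward phases. For the Map phase, the per-task inputs partition $I$, so by the linear-work assumption on Map the Map functions jointly do $O(\ninput)$ work; since the number of Map tasks is itself $O(\ninput)$ (bounded, say, by $\ninput / \avgChunkSize$), the constant per-task overhead adds only a further $O(\ninput)$, giving $T_{M} = O(\nmap) = O(\ninput)$. For the Reduce phase, I would first invoke monotonicity of the Combiner to conclude that the data leaving the balanced tree is no larger than the Map output, i.e. at most \nkeyvaluemap key-value pairs spread over \nkeysetmap distinct keys; the linear-work assumption on Reduce then bounds the Reduce work by $O(\nkeyvaluemap)$, and the $O(\nred)$ Reduce tasks contribute $O(\nred)$ more, so $T_{R} = O(\nred) = O(\nkeyvaluemap)$.

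The core of the argument --- and the step I expect to be the main obstacle --- is bounding the cost $T_{C}$ of the self-adjusting balanced tree by $O(\ncon)$. The idea is that the \nkeyvaluemap Map-output pairs are delivered to the leaves of the tree at a cost already charged to the shuffle, and that from there each of the \ncon internal stages applies the Combiner once; using linearity together with monotonicity (so that a combined value is never larger than the combined values feeding it, and in the typical aggregating case is a single value of bounded size), one shows that the work does not accumulate super-linearly as data flows up the tree, so the \ncon stages together cost $O(\ncon)$. Substituting the mode-specific stage counts then gives $O(\nconAppend) = O(\nkeysetmap)$ for append-only windows, $O(\nconFix) = O(\nkeysetmap \cdot \lceil \log_{2}(\mathit{buckets}) \rceil)$ for fixed-width slides, and $O(\nconVariable) = O(\nkeysetmap \cdot \log_{2}(\nmap))$ for variable-width slides. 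The delicate point is precisely this monotonicity bookkeeping: without it the per-stage volume could grow geometrically toward the root, so I would state and apply the monotonicity hypothesis exactly here, and also verify that distributing the \nkeyvaluemap pairs to the leaves is not double-counted across $T_{M}$, $T_{C}$, and $T_{R}$.

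Finally, adding $T_{M} + T_{C} + T_{R}$ gives the stated total-time bound $O(\nmap + (\nred + \ncon)) = O(\ninput + \nkeyvaluemap)$. For the overhead I would compare against a conventional MapReduce run, whose cost is $O(\ninput + \nkeyvaluemap)$ --- one Map pass, one Combine pass, one Reduce pass --- and observe that the only structural difference is that \projecttitle replaces the single Combine pass by the \ncon-stage balanced tree; hence the additive overhead is $O(\ncon) = O(\nconAppend \,||\, \nconFix \,||\, \nconVariable)$. In particular, when the communication and bookkeeping costs are independent of \ninput and the tree has $O(\nkeyvaluemap)$ stages, as in append-only mode, this overhead is absorbed into $O(\ninput + \nkeyvaluemap)$ and is therefore constant, matching the claim; in the fixed- and variable-width modes it is the larger quantity $O(\nconFix)$ or $O(\nconVariable)$ respectively.
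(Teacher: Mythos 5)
Your proposal is correct and follows essentially the same route as the paper: a phase-by-phase decomposition bounding the Map contribution by $O(\nmap)=O(\ninput)$, the Reduce contribution by $O(\nred)=O(\nkeysetmap)\le O(\nkeyvaluemap)$, and the tree contribution by $O(\ncon)$, with the overhead identified as the added tree stages. The only piece of the paper's argument you do not reproduce explicitly is its justification that $\ncon \in O(\nkeyvaluemap)$ --- the contraction tree has at most \nkeyvaluemap leaves and internal nodes of degree at least two --- relying instead on the mode-specific stage counts, but this is a minor presentational difference rather than a gap.
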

\begin{proof}
The number of Map and Reduce tasks in a particular job can be derived
from the input size and the number of distinct keys that are emitted
by the Map function: the Map function is applied to \emph{splits} that
consist of one or more input chunks, and each application of the Map
function is performed by one Map task. Hence, the number of Map tasks
\nmap is in the order of input size $O(\ninput)$.  In the Reduce
phase, each Reduce task processes all previously emitted key-value
pairs for at least one key, which results in at most $\nred =
\nkeysetmap$ reduce tasks.  To bound the number of self-adjusting balanced tree,
we note that the tree leaves are the
output data chunks of the Map phase, whose internal nodes each has at
least two children. Since there are at most \nkeyvaluemap pairs output
by the Map phase, the total number of reduce tasks is bounded by
\nkeyvaluemap.  Hence the total number of stages in self-adjusting balanced
tree is bounded by $\ncon \in O(\nkeyvaluemap)$.  Since the number of reduce
tasks is bounded by $\nkeysetmap \le \nkeyvaluemap$ , the total number of tasks
is $O(\ninput + \nkeyvaluemap)$.  
\end{proof}

\begin{theorem}[Initial Run: Space]
\label{thm:space-initial}
Total storage space for performing an \projecttitle computation with
an input of size \ninput, where \nkeyvaluemap key-value pairs are
emitted by the Map phase, and where Combine is monotonic is $O(
\nkeyvaluemap)$.
\end{theorem}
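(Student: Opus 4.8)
I would bound the storage contributed by each of the three phases together with the self-adjusting memoization store, and argue that each contribution is $O(\nkeyvaluemap)$. The original input of size \ninput is supplied to the computation rather than allocated by it, and since each input chunk is consumed by exactly one Map task it need not be retained past that task; hence \ninput is not charged to the bound, which is why the statement reads $O(\nkeyvaluemap)$ and not $O(\ninput+\nkeyvaluemap)$.

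\textbf{Per-phase accounting.} The Map phase materializes and memoizes exactly \nkeyvaluemap key-value pairs, hence $O(\nkeyvaluemap)$ space. For the self-adjusting balanced tree, recall from the proof of Theorem~\ref{thm:time-initial} that its leaves are the $O(\nkeyvaluemap)$ Map-output chunks and that every internal node has at least two children, so it has $O(\nkeyvaluemap)$ nodes in total. Writing $s(v)$ for the size of the data held at node $v$, monotonicity of Combine gives $s(v) \le \sum_{c \text{ child of } v} s(c)$, so the data produced at each level of the tree is bounded by the data at the level immediately below it, which is in turn at most the Map output $O(\nkeyvaluemap)$. Finally, the combined pairs fed into the Reduce phase have total size at most \nkeyvaluemap by monotonicity, and since Reduce runs in time linear in its input, the Reduce output — and therefore the space to hold it — is also $O(\nkeyvaluemap)$. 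The memoization store keeps only a constant-size digest per task plus these same outputs, and the number of tasks is $O(\ninput+\nkeyvaluemap)$, so it contributes nothing new asymptotically. Summing the three phases and using $\nkeysetmap\le\nkeyvaluemap$ for the number of distinct keys yields a total of $O(\nkeyvaluemap)$.

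\textbf{Main obstacle.} The delicate step is the contraction-tree term. Because a balanced tree over \nkeyvaluemap leaves has logarithmic depth, a crude level-by-level sum would only give $O(\nkeyvaluemap\log\nkeyvaluemap)$. Closing the gap to $O(\nkeyvaluemap)$ forces one to use monotonicity of Combine carefully: either one argues that only the current working frontier of the tree — a constant number of levels — need be resident while the run proceeds, or one exhibits an invariant of the self-adjusting representation showing that the data resident across the whole tree never exceeds the Map output. Pinning down and justifying that invariant is where the real work lies; the remaining bookkeeping for the Map, Reduce, and memoization components is routine.
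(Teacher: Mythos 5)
Your decomposition is essentially the paper's: the published proof also charges the Map output at $\nkeyvaluemap$, invokes monotonicity of Combine to bound the Combine outputs by $O(\nkeyvaluemap)$, and notes that only the task outputs of the most recent run are retained. Your treatment of the input (not charged to the bound), of the Reduce output, and of the memoization metadata is more explicit than the paper's, but consistent with it.

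The place where your writeup stops short is exactly the step you flag as the ``main obstacle,'' and you are right that it is the crux. Monotonicity alone gives at most $\nkeyvaluemap$ of data \emph{per level} of the contraction tree; since a balanced tree over $O(\nkeyvaluemap)$ leaves has $\Theta(\log \nkeyvaluemap)$ levels, all of whose outputs must be memoized if subtree results are to be reused in a dynamic update, the naive sum is $O(\nkeyvaluemap \log \nkeyvaluemap)$, not $O(\nkeyvaluemap)$. You name two possible repairs (retaining only a bounded frontier of levels, or a per-level geometric contraction) but establish neither, so as written your argument proves only the $O(\nkeyvaluemap \log \nkeyvaluemap)$ bound. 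You should know that the paper's own proof does not close this gap either: it simply asserts that with monotonic Combine the size of the output of the Combine tasks is $O(\nkeyvaluemap)$, with no accounting of the number of levels. So your proposal is not weaker than the published argument --- it is the same argument with the missing step made visible --- but to actually reach the stated $O(\nkeyvaluemap)$ bound you need an additional hypothesis (for instance, that Combine shrinks its input by a constant factor at every level, or that only $O(1)$ levels are kept resident), and your proof should state explicitly which one it relies on.
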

\begin{proof}
\projecttitle~requires additional storage space for storing the intermediatery
output of the self-adjusting balanced tree.  Since \projecttitle only keeps data
from the most
recent run (initial or dynamic run), we use storage for
remembering only the task output from the most recent run.  The output size
of the map tasks is bounded by \nkeyvaluemap.  With monotonic Combine
functions, the size of the output of Combine tasks is bounded by
$O(\nkeyvaluemap)$.
\end{proof}

\begin{theorem}[Dynamic Update: Space and Time]
~\\
In \projecttitle, a dynamic update requires time, where F where
$F$ is the set of changed or new (fresh) \emph{Map}, \emph{Combiner}, and
\emph{Reduce} tasks, is 
\[
O\left(\sum\limits_{a\in F} t(a) \right).
\]
The total storage requirement is the same as an initial
run.
\end{theorem}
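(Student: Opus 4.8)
The plan is to model a dynamic update as a single \emph{change-propagation} pass over the computation graph of the run on $I'$, and to show that every unit of work and bookkeeping in that pass is charged to a fresh task. First I would set up notation: after the run on $I$, all of its task outputs have been recorded in the memoization table, which by Theorem~\ref{thm:space-initial} occupies $O(\nkeyvaluemap)$ space. When the input changes to $I'$, partition the tasks of the new run into the fresh set $F$ --- the Map, Combine, and Reduce tasks whose input differs from, or did not occur in, the run on $I$ --- and the reused set $\setReused$ --- tasks whose entire input is bit-for-bit identical to some task of the previous run. The pipeline structure split $\to$ Map $\to$ self-adjusting balanced tree $\to$ Reduce makes freshness propagate monotonically along edges: a changed input chunk makes its Map task fresh, a changed Map output makes the incident tree leaves fresh, a changed tree node makes its parent fresh, and a changed tree root makes the corresponding Reduce task fresh; every remaining task lies in $\setReused$.

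Next I would bound the running time. A task in $\setReused$ is never re-executed --- its output is simply the one preserved from the previous run --- so it contributes nothing to the update cost. A task $a \in F$ is executed exactly once, at cost $t(a)$ by the linear-work assumption, after resolving each of its input dependencies. Resolving a dependency is either reading the freshly produced output of another task in $F$, or a single memoization lookup, costing $O(\timeMemo)$, that retrieves the preserved output of a task in $\setReused$; the size of the value actually read is already counted inside $t(a)$, since the work of $a$ is linear in its input and, because Combine is monotonic, those values never blow up. The decisive observation is that the number of dependencies of $a$ is itself $O(t(a))$: a Map task has one split, a contraction-tree node has bounded arity, and a Reduce task consumes one value per incident key-value pair --- in each case $t(a)$ is linear in exactly that count. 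Charging every dependency resolution, in particular every boundary lookup from $F$ into $\setReused$, to the unique fresh task that issued it then sums to $O\left(\sum_{a\in F} t(a)\right)$. The remaining costs --- identifying the changed chunks, which is proportional to the size of the input diff and hence to the work of the fresh Map tasks, and discarding each obsoleted record at $O(1)$ amortized --- are absorbed the same way, so the update time is $O\left(\sum_{a\in F} t(a)\right)$.

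For the space bound I would invoke that \projecttitle keeps only the data of the most recent run. Once the update finishes, that run is the computation on $I'$, which has the same shape as an initial run on $I'$, so Theorem~\ref{thm:space-initial} gives $O(\nkeyvaluemap)$ with $\nkeyvaluemap$ now counting the pairs emitted on $I'$ --- i.e.\ the same bound as an initial run. During the pass one transiently holds part of the old memo table alongside the new outputs, but each stale record is freed as soon as change propagation passes it, and in any event the sum of the two runs' footprints is $O(\nkeyvaluemap)$ by the same theorem, so the peak does not change the asymptotics.

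The main obstacle is the charging argument of the second paragraph: making rigorous that the bookkeeping of change propagation --- the graph traversal that locates $F$ and the memo lookups along the $F$/$\setReused$ frontier --- adds no separate $|\setReused|$ or ``frontier size'' term but is fully absorbed into $\sum_{a\in F} t(a)$. This hinges on the ``dependency count is linear in task work'' fact together with the injectivity of the charging (each lookup has a unique fresh requester). A secondary subtlety is that the self-adjusting balanced tree must be rebuilt by a deterministic, history-independent rebalancing rule, so that a small perturbation of the Map output disturbs only a small, well-defined set of tree nodes; this is exactly the property behind the stage counts $\nconAppend$, $\nconFix$, and $\nconVariable$ and may be cited rather than reproved here, since the present theorem claims only proportionality to $|F|$ and not that $|F|$ is itself small.
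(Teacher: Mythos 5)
Your proposal is correct and follows essentially the same approach as the paper: partition the tasks of the run on $I'$ into fresh and reused, observe that reused tasks are answered from the memoization table while only fresh tasks execute, and inherit the space bound from the initial-run analysis. The paper's own proof is a two-sentence version of exactly this argument; your additional work --- the charging of dependency resolutions and memo lookups along the fresh/reused frontier to the unique fresh requester, and the explicit appeal to Theorem~\ref{thm:space-initial} for space --- fills in details the paper leaves implicit rather than taking a different route.
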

\begin{proof}
Consider \projecttitle performing an initial run with input $I$ and
changing the input to $I'$ and then performing a subsequent run
(dynamic update). During the dynamic update, tasks with the same type
and input data will re-use the memoized result of the previous runs,
avoiding recomputation.  Thus, only the fresh tasks need to be
executed, which takes $O\left(\sum\limits_{a\in F} t(a)\right)$, where
$F$ is the set of changed or new (fresh) \emph{Map}, \emph{Contract}
and \emph{Reduce} tasks, respectively, and $t(\cdot)$ denotes the
processing time for a given task.
\end{proof}

In the common case, we expect the execution of fresh tasks to
dominate the time for dynamic updates. The time for dynamic update is therefore
likely to be determined by the number of fresh tasks that are created
as a result of a dynamic change.  It is in general difficult to bound
the number of fresh tasks, because it depends on the specifics of the
application.  As a trivial example, consider, inserting a single
key-value pair into the input.  In principle, the new pair can force the
Map function to generate a very large number of new key-value pairs,
which can then require performing many new reduce tasks.  In many
cases, however, small changes to the input lead only to small changes in
the output of the Map, Combine, and Reduce functions, e.g., the Map
function can use one key-value pair to generate several new pairs, and
the Combine function will typically combine these, resulting in a
relatively small number of fresh tasks.
As a specific case, assume that the Map function generates $k$
key-value pairs from a single input record, and that the Combine
function monotonically reduces the number of key-value pairs.

\begin{theorem}[Number of Fresh Tasks]
\label{them:bound-fresh}
If the Map function generates $k$ key-value pairs from a single input
record, and  the Combine function is monotonic, then the number of
fresh tasks is at most $O(k\log{\nkeyvaluemap} + k)$.
\end{theorem}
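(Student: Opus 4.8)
The plan is to follow the single-record change through the three phases of an \projecttitle computation and charge fresh tasks phase by phase. First I would note that the changed (or inserted/deleted) input record lies in exactly one split, so only $O(1)$ Map tasks are fresh; by hypothesis each re-executed Map task emits $k$ key-value pairs, so the Map phase contributes $O(k)$ to the count and perturbs at most $O(k)$ of the leaves of the self-adjusting balanced tree (whose leaves are, as in the proof of Theorem~\ref{thm:time-initial}, the output chunks of the Map phase).

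Next I would bound the number of fresh Combine tasks, which should be the dominant term. Since the contraction tree is balanced over the $\nkeyvaluemap$ key-value pairs emitted by the Map phase, its depth is $O(\log \nkeyvaluemap)$; moreover, because Combine is monotonic it never enlarges the data flowing upward, so the number of surviving key-value pairs cannot grow and the tree stays balanced after the update. An internal node (Combine task) is fresh only if the data in its subtree changed, which happens only for nodes lying on a root-to-leaf path ending at one of the $O(k)$ perturbed leaves. The union of these $O(k)$ paths, each of length $O(\log \nkeyvaluemap)$, has size at most $O(k \log \nkeyvaluemap)$.

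Then I would bound the fresh Reduce tasks: the $k$ new key-value pairs touch at most $k$ distinct keys, and a Reduce task is fresh only if the multiset of values for its key changed, so at most $O(k)$ Reduce tasks are fresh. Summing the three contributions gives $O(k) + O(k\log\nkeyvaluemap) + O(k) = O(k\log\nkeyvaluemap + k)$; the additive $+\,k$ simply absorbs the degenerate case in which $\nkeyvaluemap$ is constant and the tree is trivial.

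The main obstacle I expect is the middle step: rigorously arguing that monotonicity of Combine, together with the self-adjusting (stable) construction of the contraction tree, confines the fresh internal nodes to the root-to-leaf paths through the perturbed leaves — rather than allowing a change to spread sideways into sibling subtrees or to trigger a global rebalancing. This requires appealing to the fact that the tree's shape is essentially a deterministic function of the input sequence that changes only locally under a local input change, and to the fact that a monotone Combine cannot turn a bounded change in one child into an unbounded change at the parent. If the contraction procedure only guarantees balance in expectation, the stated bound should be read as an expected bound, which I would flag explicitly.
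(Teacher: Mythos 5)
Your proposal is correct and follows essentially the same route as the paper: the paper likewise charges at most $k$ fresh Combine tasks per level of the contraction tree, multiplies by the $O(\log \nkeyvaluemap)$ depth, and adds $k$ fresh Reduce tasks to obtain $O(k\log\nkeyvaluemap + k)$. Your version merely spells out the details the paper leaves implicit (the $O(1)$ fresh Map tasks, the confinement of fresh internal nodes to the perturbed root-to-leaf paths, and the role of stability of the tree shape), and usefully flags that the paper's phrase ``the depth of the contraction tree is $\nkeyvaluemap$'' must be read as $O(\log \nkeyvaluemap)$ for the stated bound to follow.
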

\begin{proof}
At most $k$ combine at each level of the self-adjusting
tree will be fresh, and $k$ fresh reduce tasks will be needed. Since
the depth of the contraction tree is $\nkeyvaluemap$, the total number
of fresh tasks will therefore be $O(k\log{\nkeyvaluemap} + k) =
O(k\log{\nkeyvaluemap})$.
\end{proof}

Taken together the last two theorems suggest that small changes to
data will lead to the execution of only a small number of fresh tasks, and based
on the tradeoff between the memoization costs and the cost of
executing fresh tasks, speedups can be achieved in practice.

\bibliographystyle{abbrv}

\bibliography{main}

\end{document}